   \newtheorem{Corollary}{Corollary}
   \newtheorem{Proposition}{Proposition}
   \newtheorem{Definition}{Definition}
  \titlespacing\subsection{0pt}{0pt plus 1pt minus 2pt}{0pt plus 1pt minus 2pt}
   \titlespacing\section{0pt}{5pt plus 1pt minus 2pt}{5pt plus 1pt minus 2pt}
\newcommand{\ac}[1]{\textcolor{black}{#1}}
\newcommand{\nh}[1]{\textcolor{black}{#1}}
\newcommand{\nn}[1]{\textcolor{black}{#1}}
\newcommand{\hs}[1]{\textcolor{black}{#1}}
\begin{document}

\title{Rate Outage and Meta-Distribution for Uplink Networks in Finite Block-Length Regime }
\author{
\IEEEauthorblockN{Nourhan Hesham, Hesham ElSawy, Jahangir Hossain, and Anas Chaaban}
\thanks{%

 N. Hesham, J. Hossain, and A. Chaaban are with the School of Engineering, University of British Columbia, Kelowna, BC V1V 1V7, Canada (e-mail: \{nourhan.soliman,anas.chaaban,jahangir.hossain\}@ubc.ca), and N. Hesham is on leave from the Faculty of Engineering, Cairo University, Cairo, Egypt.

H. ElSawy is with the School of Computing, Queen’s University, Kingston, ON K7L 2N8, Canada. (e-mail: hesham.elsawy@queensu.ca).

This publication is based upon work supported by King Abdullah University of Science and Technology (KAUST) under Award No. OSR-2018-CRG7-3734.                 

}
}

\maketitle
\begin{abstract}
    With the expected proliferation of delay constrained applications, future communication technologies are pushed towards using short codes. The performance using short codes cannot be inferred through classical channel capacity analysis, which intrinsically assumes long codes and vanishing frame error rate (FER). This paper studies the performance of an uplink large-scale network in the finite blocklength regime. Bounds on the spatially averaged rate outage probability as well as the coding rate meta distribution are derived. The results reveal the exact achievable rate for a given blocklength and FER, and demonstrate the discrepancy between the actual network rate and idealistic classical channel capacity.
\end{abstract}

\begin{IEEEkeywords}
Coding Rate Meta Distribution, Finite Blocklength Regime, Large-Scale Networks, Rate Outage Probability, Stochastic Geometry.
\end{IEEEkeywords}
\section{Introduction}
Future technologies are expected to enable services that require ultra-reliable low-latency communication. Achieving low latency requires the use of short codes. However, conventional characterizations of ergodic rate and outage are derived under the assumption of using infinitely long codewords and vanishing frame error rate (FER), which we call henceforth the asymptotic regime (AR). {To assess the performance of} short codes, Polyanskiy {\it et al}.~\cite{polyanski,BlockFadingPolyankiy} characterized the maximum achievable rates in the finite blocklength regime (FBR) for additive white Gaussian noise (AWGN) and block fading channels, respectively,  i.e., under a given codelength and FER. In this paper, we focus on studying the performance of large-scale networks in the FBR using stochastic geometry.

Most works that studied the performance of large-scale networks focused on the AR. For instance, in the uplink (UL) which is the focus of this work, \cite{Uplink1_Hesham} studied the rate outage probability and average spectral efficiency of a single and a multi-tier network using truncated channel inversion power control. In~\cite{UL_1}, the performance of \nh{dense heterogeneous cellular networks} in the AR under a fractional power control strategy is studied and the signal to interference-plus-noise-ratio (SINR) is characterized as a function of the association rules and power control parameters. {Another important performance metric studied in many works is the meta-distribution, which provides information about the \hs{percentile} of users experiencing a specific quality of service \hs{for an arbitrary, yet fixed, network deployment}.} In~\cite{Meta,Meta1}, the authors analyzed the signal-to-interference ratio (SIR) meta distribution for UL networks in the AR. In contrast, \cite{TWC_paper,Uplink4} considered the FBR. \cite{TWC_paper} studied the average coding rate, rate outage probability, and coding rate meta-distribution for a downlink (DL) large-scale network. However, \cite{Uplink4} studied the delay-bound violation probability under a given end-to-end latency bound in an UL multi-cell interference channel, and maximized the transmission reliability by optimizing the transmission rate.

To the best of the authors' knowledge, the performance of large-scale UL networks in the FBR in terms of rate outage probability and the coding rate meta-distribution has not been investigated. Note that the work in \cite{TWC_paper} studies a large-scale network in the FBR, {however} \cite{TWC_paper} focuses on the DL which is operationally different {from} the UL {that} uses power control. {Also, the UL network is not homogeneous as the DL network, which in return affects the analysis.} {In particular,} this paper studies {these metrics in an} UL large-scale networks in the FBR using stochastic geometry tools, while assuming fractional path-loss inversion power control~\cite{Meta}. {The} outage probability measures the fraction of failed transmissions at a given data rate, while the meta distribution provides fine-grained information about network performance in the form of the percentile of users that can achieve a specific minimum data rate at a given FER threshold.

\section{System Model}\label{sec:SystemModel}

 We consider a single-tier UL large-scale network utilizing universal frequency reuse {causing inter-cell interference,} combined with orthogonal multiple access (OMA) to avoid intra-cell interference.\footnote{{OMA schemes avoid intra-cell interference through an orthogonal allocation of resources.}} Hence, only one user per cell has access to the same resource block. \nh{However, its received signal at the BS is interfered by a large number of users from other cells.}
 The base stations (BSs) constitute a 2-dimensional (2-D) spatial {Poisson point process} (PPP) $\Psi$ with intensity $\lambda $. \nn{The users' equipment (UEs) point process $\Phi=\{u_i; i \in \mathbb{N}\}$ is modeled by associating each UE with its geographically closest BS.} \ac{The intended UE sends information to its serving BS using codewords from a code with rate $R$, a target FER $\bar{\epsilon}$, and length $n$ symbols. The value of $n$ can reflect latency and/or energy consumption constraints.\footnote{\nh{In a network, one of the dominant delays is the transmission delay, especially in dense networks. The transmission delay is a function of the blocklength, hence it should be designed to satisfy the latency constraint.} \nn{In applications with stringent energy consumption requirements, the average energy of a block should satisfy energy consumption constraints.}} Besides, \nh{the target FER $\bar{\epsilon}$ can be defined following a reliability constraint,} \nn{so that} the FER $\epsilon$ of the network conditioned on the SINR should be less than or equal $ \bar{\epsilon}$.} Treating interference as noise and following Polyanskiy {\it et al.} \cite{polyanski}, the coding rate at SINR $\Omega$, blocklength $n$, and FER $\epsilon$ can be approximated by
\begin{align}
    R_{n,\epsilon}(\Omega)= \log_2(1+\Omega)-\frac{\sqrt{V(\Omega)}Q^{-1}(\epsilon)}{\sqrt{n}}+\frac{1}{2 n}\log_2(n),\nonumber
\end{align}
where $V(\Omega)=\frac{\Omega(\Omega+2)\log_2^2(e)}{(\Omega+1)^2}$ is the channel dispersion.
In practical networks, the SINR is unknown at the transmitter and hence, the transmitter encodes at a constant (target) rate $R_t$ and blocklength $n$. As a result, the resulting FER $\epsilon$ conditioned on SINR $\Omega$ may or may not satisfy the desired design FER $\bar{\epsilon}$ due to stochastic fading and aggregate interference. In this case, it is important to keep this conditional FER below a threshold to avoid having excessive frame errors.   

\ac{For this network, the BS receives a signal $y$ from its connected UE that lies at a distance $r_0$. The distance $r_0$ follows a Rayleigh distribution~\cite{r_o}, i.e., it is distributed as $ f_{r_0}(r_0)=   2 \pi \lambda r_0 \exp\{-\pi \lambda r_0^2\},\ \  0<r_0<\infty$, due to the nearest BS association assumption. The BS also receives interference from other UEs located at distances $u_1,\ u_2,...\ $, given by}
\begin{align}
    y= \sqrt{P_0} {h}_0 r_0^\frac{-\eta}{2}s_0+\sum\limits_{u_i \in \Phi\setminus \{r_0\}} x_i \sqrt{P_i} h_i u_i^\frac{-\eta}{2} s_i +w,
\end{align}
where {$P_0$ (resp. $P_i$) is the transmit power of the intended UE (resp. UE$_i$), $h_0$ (resp. $h_i$) is the channel coefficient from intended UE (resp. UE$_i$) to the BS, $\eta$ is the path loss exponent, $s_0$ (resp. $s_i$) is a unit average power codeword symbol transmitted by the intended UE (resp. $UE_i$), $x_i\sim{\rm Bernoulli}(\delta)$ is an activity indicator for UE $i$, and $w$ is the AWGN noise with noise power $N_0$. {Utilizing} fractional path-loss inversion, $P_i=\rho_o r_i^{\alpha\eta}$} where \ac{$\alpha$ is a compensation factor which determines the fraction of channel inversion (i.e., $\alpha=1$ refers to full channel inversion whereas $\alpha=0$ refers to no channel inversion), $\rho_o$ is the power control parameter ({at $\alpha=1$, $\rho_o$ is the average receive power, and at $\alpha=0$, $\rho_o$ is the transmit power.}), and $r_i$ is the distance between the UE and its serving BS.} A block fading channel model is assumed to follow a Rayleigh distribution ($h_0\sim \mathcal{CN}(0,1)$, $h_i$ and $h_0$ are i.i.d), where the channel coefficients remain constant for $L$ consecutive symbols and change to an independent realization in the next $L$ symbols. To ensure that the channel remains constant during the transmission of $n$ symbols, we require $n= L/l$ for some integer $l\geq 1$.\footnote{$l$ represents that number of frames that can be transmitted within one coherence interval. The analysis in this paper can be simulated by averaging over all coherence intervals and considering one codeword per coherence interval.} The SINR is given by

\begin{align}
  \Omega=\frac{{P_0} |{h}_0|^2 r_0^{-\eta}}{\sum\limits_{u_i \in \Phi\setminus \{r_0\}} x_i {P_i} |h_i|^2 u_i^{-\eta} +N_0}=\frac{{\rho_o} |{h}_0|^2 r_0^{-\eta(1-\alpha)}}{\mathcal{I} +N_0},
\end{align}

\section{Analysis}\label{sec:Outage_UL}

The rate outage probability and the coding rate meta distribution are characterized in this section, providing a characterization of the network performance.

The serving BS wants to decode the information sent by the intended UE indexed by $0$, which is distorted by noise $N_0$ and {aggregate} interference $\mathcal{I}$ received from {all} {active} UEs using the same UL channel frequency in other cells. The Laplace transform of $\mathcal{I}$ is given by
\begin{align}\label{eq:LT}
  \mathcal{L}_{\mathcal{I}}(s)\hspace{-0.07cm}=\hspace{-0.05cm}\exp\hspace{-0.07cm}\left\{\hspace{-0cm} \int\limits_{\frac{r_0^{\eta(1-\alpha)}}{\rho_o}}^{\infty}\hspace{-0.1cm} \frac{-2\gamma \left( \hspace{-0.07cm} 1+\alpha,\pi \lambda (\rho_o w)^{\frac{2}{\eta(1-\alpha)}}\hspace{-0.07cm} \right)\delta s}{\rho_o^{\frac{-2}{\eta}}(\pi \lambda)^{\alpha-1}\eta w^{1-\frac{2}{\eta}} (w+s)}  dw\hspace{-0.07cm}\right\},
\end{align}
where $\gamma(a,b)\hspace{-0.1cm}=\hspace{-0.1cm}\int_{0}^{b} t^{a-1} e^{-t}\ dt$. See Appendix A for the derivation of $\mathcal{L}_{\mathcal{I}}(s)$.

\subsection{Rate Outage Probability} 
The definition of the rate outage is provided as follows.
\begin{Definition}\label{def:outage}
Given a target rate $R_t$, the outage probability of large-scale UL network for a blocklength $n$ and {target FER} $\bar{\epsilon}$, is defined as $\mathcal{O}_{n,\bar{\epsilon}}(R_t)=\mathbb{P}(R_{n,\bar{\epsilon}}(\Omega)<R_t)$.
\end{Definition}

{Due to the added channel dispersion term $V(\Omega)$, the rate outage probability is difficult to evaluate. By studying the behavior of the channel dispersion as a function of SINR $\Omega$, we have found that it increases from $0$ to $\log_2^2(e)$ as the SINR increases from $0$ to $\infty$. Hence, the outage probability can be bounded as follows.}

\begin{Proposition}\label{Proposition:1}
The rate outage probability \hs{under the considered system model} is upper bounded by
\begin{align}\label{eq:Outage_UB}
  & \bar{ \mathcal{O}}_{n,\bar{\epsilon}}(R_t,\Upsilon)\hspace{-0.05cm}= \mathbb{F}_{\Omega}(\max(\Upsilon,2^{R_t+a_{n,\bar{\epsilon}}(\infty)-b_n}-1))\nonumber\\
   &\ \ +\mathbb{F}_{\Omega}(\min(\Upsilon,2^{R_t+a_{n,\bar{\epsilon}}(\Upsilon)-b_n}-1))-\mathbb{F}_{\Omega}(\Upsilon)
\end{align}
for any $\Upsilon>0$, and lower bounded by 
\begin{align}\label{eq:Outage_LB}
    \underline{\mathcal{O}}_{n,\bar{\epsilon}}(R_t,\Lambda)=&\mathbb{F}_{\Omega}(\min(\Lambda,2^{R_t-b_n}-1))-\mathbb{F}_{\Omega}(\Lambda)\nonumber\\
    &+\mathbb{F}_{\Omega}(\max(\Lambda,2^{R_t+a_{n,\bar{\epsilon}}(\Lambda)-b_n}-1))
\end{align}
for a $\Lambda>0$, where $a_{n,\bar{\epsilon}}(x)= \sqrt{ \frac{V(x)}{ n} } Q^{-1}(\bar{\epsilon})$, $b_n=\frac{1}{2n}\log_2(n)$, $\mathcal{L}_{\mathcal{I}}(\cdot)$ is provided in \eqref{eq:LT}, and $\mathbb{F}_\Omega(x)$ is the cumulative probability function (CDF) of the $\rm SINR$ $\Omega$ given by
\begin{align}
   \hspace{-0.15cm} \mathbb{F}_{\Omega}(\Omega)=1\hspace{-.07cm}-\hspace{-.15cm}\int_{0}^{\infty}\hspace{-.3cm} 2\pi\lambda r_0 &e^{\hspace{-.1cm}-\pi\lambda r_0^2-\frac{\Omega r_0^{\eta(1-\alpha)} }{\rho_o /N_0}}\hspace{-.12cm}\mathcal{L}_{\mathcal{I}}\hspace{-.07cm}\left(\frac{\Omega \rho_o^{-1}}{r_0^{-\eta(1-\alpha)}}\hspace{-.1cm}\right) d r_0,
\end{align}

\end{Proposition}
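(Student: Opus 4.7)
The strategy is to exploit monotonicity of the channel dispersion $V(\Omega)$ in $\Omega$, because the only term that obstructs a clean inversion of $R_{n,\bar{\epsilon}}(\Omega)<R_t$ is $a_{n,\bar{\epsilon}}(\Omega)$, which depends on $\Omega$. Since $V(\Omega)$ rises monotonically from $0$ to $\log_2^2(e)$, the function $a_{n,\bar{\epsilon}}(\Omega)$ is non-decreasing with $a_{n,\bar{\epsilon}}(0)=0$ and finite limit $a_{n,\bar{\epsilon}}(\infty)$. So the plan is to split the event $\{R_{n,\bar{\epsilon}}(\Omega)<R_t\}$ at an auxiliary SINR threshold and replace $a_{n,\bar{\epsilon}}(\Omega)$ by an appropriate constant on each sub-event, converting the outage event into a union of pure SINR-threshold events that $\mathbb{F}_{\Omega}$ can score directly.

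For the upper bound, I would partition $\mathbb{R}_{\geq 0}$ at $\Upsilon$ and, on each piece, replace $a_{n,\bar{\epsilon}}(\Omega)$ by an \emph{upper} bound to produce a superset of the outage event. On $\{\Omega<\Upsilon\}$ monotonicity gives $a_{n,\bar{\epsilon}}(\Omega)\leq a_{n,\bar{\epsilon}}(\Upsilon)$, so outage implies $\Omega<2^{R_t+a_{n,\bar{\epsilon}}(\Upsilon)-b_n}-1$; combining with $\Omega<\Upsilon$ contributes $\mathbb{F}_{\Omega}(\min(\Upsilon,2^{R_t+a_{n,\bar{\epsilon}}(\Upsilon)-b_n}-1))$. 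On $\{\Omega\geq\Upsilon\}$ use $a_{n,\bar{\epsilon}}(\Omega)\leq a_{n,\bar{\epsilon}}(\infty)$, yielding $\Omega<2^{R_t+a_{n,\bar{\epsilon}}(\infty)-b_n}-1$; intersecting with $\Omega\geq\Upsilon$ contributes $\mathbb{F}_{\Omega}(\max(\Upsilon,2^{R_t+a_{n,\bar{\epsilon}}(\infty)-b_n}-1))-\mathbb{F}_{\Omega}(\Upsilon)$, where the $\max$ collapses the contribution to zero whenever the implied threshold falls below $\Upsilon$ so the second sub-event is empty. Summing the two pieces reproduces \eqref{eq:Outage_UB}.

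For the lower bound the direction is reversed: I would build \emph{subsets} of the outage event by replacing $a_{n,\bar{\epsilon}}(\Omega)$ by a lower bound on each piece of the partition at $\Lambda$. On $\{\Omega<\Lambda\}$ use $a_{n,\bar{\epsilon}}(\Omega)\geq 0$, so $\Omega<2^{R_t-b_n}-1$ is \emph{sufficient} for outage; combined with $\Omega<\Lambda$ this contributes $\mathbb{F}_{\Omega}(\min(\Lambda,2^{R_t-b_n}-1))$. On $\{\Omega\geq\Lambda\}$ use $a_{n,\bar{\epsilon}}(\Omega)\geq a_{n,\bar{\epsilon}}(\Lambda)$, so $\Omega<2^{R_t+a_{n,\bar{\epsilon}}(\Lambda)-b_n}-1$ is sufficient for outage; intersected with $\Omega\geq\Lambda$ this contributes $\mathbb{F}_{\Omega}(\max(\Lambda,2^{R_t+a_{n,\bar{\epsilon}}(\Lambda)-b_n}-1))-\mathbb{F}_{\Omega}(\Lambda)$. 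Adding these yields \eqref{eq:Outage_LB}.

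The remaining step is to verify the stated CDF formula for $\Omega$: condition on $r_0\sim \mathrm{Rayleigh}(\sqrt{1/(2\pi\lambda)})$, note that $|h_0|^2$ is exponential so $\mathbb{P}(\Omega>\omega\mid r_0,\mathcal{I})=\exp(-\omega r_0^{\eta(1-\alpha)}(\mathcal{I}+N_0)/\rho_o)$, take the expectation over $\mathcal{I}$ (which inserts $\mathcal{L}_{\mathcal{I}}(\omega\rho_o^{-1}r_0^{\eta(1-\alpha)})$ from \eqref{eq:LT}), and integrate against the distribution of $r_0$. The only delicate bookkeeping is the interplay between the $\min$ and $\max$ operators, which is needed to ensure the two sub-events in each bound are disjoint yet still, respectively, contain or are contained in the outage event; the Laplace-transform derivation that powers $\mathbb{F}_\Omega$ is routine stochastic geometry and can be deferred to Appendix A.
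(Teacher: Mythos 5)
Your proposal is correct and follows essentially the same route as the paper's own proof: split the outage event at the auxiliary SINR threshold ($\Upsilon$ or $\Lambda$), use the monotonicity of $a_{n,\bar{\epsilon}}(\Omega)$ to replace it by a constant upper (resp.\ lower) bound on each piece, and read off the resulting pure SINR-threshold probabilities via $\mathbb{F}_{\Omega}$ with the same $\min$/$\max$ bookkeeping. The only cosmetic difference is that you phrase the argument through event inclusions and also sketch the standard derivation of $\mathbb{F}_{\Omega}$, which the paper states without proof.
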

\begin{proof}
The upper bound is derived in Appendix \ref{Appendix:UpperBound} and the lower bound is derived in Appendix \ref{Appendix:LowerBound}.
\end{proof}
\ac{In Proposition 1, the upper and lower bounds \nh{hold for arbitrary values} of $\Upsilon$ and $\Lambda$, respectively. However, to achieve a tight bound, \nh{these parameters} have to be optimized and this can be done numerically.}


To reduce the complexity of the expressions, simplified upper and lower bounds are provided \hs{in the following Corollary}.

\begin{Corollary}\label{Corollary:1}
    A simplified upper bound can be obtained by setting $\Upsilon=\infty$ leading to 
    \begin{align}
         \bar{ \mathcal{O}}_{n,\bar{\epsilon}}(R_t,\infty)\hspace{-0.05cm}&= \mathbb{F}_{\Omega}(2^{R_t+a_{n,\bar{\epsilon}}(\infty)-b_n}-1),
    \end{align}
    and a simplified lower bound can be obtained by setting $\Lambda=0$ leading to
        \begin{align}
         \underline{ \mathcal{O}}_{n,\bar{\epsilon}}(R_t,0)\hspace{-0.05cm}&= \mathbb{F}_{\Omega}(2^{R_t-b_n}-1),
    \end{align}
    
\end{Corollary}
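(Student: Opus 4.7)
The plan is to verify both simplifications by direct substitution of the degenerate parameter values into the bounds of Proposition 1, using only elementary properties of the CDF $\mathbb{F}_\Omega$ and of the channel dispersion function $V(\cdot)$.

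For the simplified upper bound, I would substitute $\Upsilon=\infty$ into \eqref{eq:Outage_UB} and simplify term-by-term. First, $\max(\infty,\,2^{R_t+a_{n,\bar{\epsilon}}(\infty)-b_n}-1)=\infty$, so the first term evaluates to $\mathbb{F}_{\Omega}(\infty)=1$. Next, since $V(\Omega)$ is bounded above by $\log_2^2(e)$, the quantity $a_{n,\bar{\epsilon}}(\infty)$ is finite, and therefore $2^{R_t+a_{n,\bar{\epsilon}}(\infty)-b_n}-1$ is finite, so $\min(\infty,\,2^{R_t+a_{n,\bar{\epsilon}}(\infty)-b_n}-1)=2^{R_t+a_{n,\bar{\epsilon}}(\infty)-b_n}-1$. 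Finally, the third term is $\mathbb{F}_{\Omega}(\infty)=1$. The two $1$'s cancel and the claimed simplified upper bound $\mathbb{F}_{\Omega}(2^{R_t+a_{n,\bar{\epsilon}}(\infty)-b_n}-1)$ remains.

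For the simplified lower bound, I would substitute $\Lambda=0$ into \eqref{eq:Outage_LB}. The key observation is that $V(0)=0$ from the definition $V(\Omega)=\frac{\Omega(\Omega+2)\log_2^2(e)}{(\Omega+1)^2}$, so $a_{n,\bar{\epsilon}}(0)=0$. Under the Rayleigh fading assumption the SINR is almost-surely positive, hence $\mathbb{F}_{\Omega}(0)=0$. Assuming the non-degenerate regime $R_t\geq b_n$ so that $2^{R_t-b_n}-1\geq 0$, the first term is $\mathbb{F}_{\Omega}(\min(0,\,2^{R_t-b_n}-1))=\mathbb{F}_{\Omega}(0)=0$, the second term is $\mathbb{F}_{\Omega}(0)=0$, and the third term is $\mathbb{F}_{\Omega}(\max(0,\,2^{R_t-b_n}-1))=\mathbb{F}_{\Omega}(2^{R_t-b_n}-1)$. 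Only the third term survives, giving the claimed expression.

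No step should be a real obstacle, since Proposition 1 has already done the heavy lifting; the only care needed is to handle the limiting substitution $\Upsilon\to\infty$ (justified by continuity of $\mathbb{F}_\Omega$ at $\infty$ and boundedness of $V$) and to confirm that $V(0)=0$ makes the $\Lambda=0$ case consistent with $a_{n,\bar{\epsilon}}(0)=0$. Both simplified bounds therefore follow as immediate corollaries.
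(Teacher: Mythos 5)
Your proof is correct and follows the same route the paper intends: the Corollary is an immediate specialization of Proposition~\ref{Proposition:1} obtained by substituting $\Upsilon=\infty$ (where the first and third terms cancel as $\mathbb{F}_\Omega(\infty)=1$) and $\Lambda=0$ (where $a_{n,\bar{\epsilon}}(0)=0$ since $V(0)=0$ and $\mathbb{F}_\Omega(0)=0$), exactly as you argue. Your extra caveat $R_t\geq b_n$ is not even needed, since for $R_t<b_n$ both sides of the lower-bound identity vanish because $\Omega>0$ almost surely.
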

The simplified upper bound is fairly tight at large $R_t$ as we shall see in Fig. \ref{fig:Rate_Outage_n}, whereas the simplified lower bound is approximately the rate outage probability in the AR. 


\subsection{{Coding Rate Meta Distribution}}\label{sec:MetaDistribution}

 The meta distribution provides more detailed information about the quality of service in terms of coding rate by providing the fraction of users achieving a minimum data rate of $R_t$ {with probability at least $p_{th}$} at a blocklength $n$ and a target FER $\bar{\epsilon}$. Denote by $\mathbb{P}_s(R_t,n,\bar{\epsilon})$ the conditional success probability \nh{for an arbitrary, yet fixed, network realization $\Phi\cup\Psi$}, {defined as the achievability} of a rate $> R_t$ at a blocklength $n$ and FER  $\bar{\epsilon}$. Then, $\mathbb{P}_s(R_t,n,\bar{\epsilon})$ is given by
\begin{align}\label{eq:Ps_exact}
&\mathbb{P}_s(R_t,n,\bar{\epsilon})=\mathbb{P}^{^^21}(R_{n,\bar{\epsilon}}(\Omega) > R_t|\Phi,\Psi)\nonumber\\
    &\hspace{-0.1cm}=\mathbb{P}^{!}\left(\log_2(1+\Omega)-\sqrt{\frac{V(\Omega)}{n}} Q^{-1}(\bar{\epsilon})+b_n> R_t|\Phi,\Psi\right),
\end{align}
where $\mathbb{P}^{^^21}(\cdot)$ is the reduced Palm probability \cite[Def. 8.8]{reducedPalm_book}.

Note that $\mathbb{P}_s(R_t,n,\bar{\epsilon})$ is a random variable which depends on the relative locations between the UEs and the BSs within the network. Its moments are defined as
\begin{align}\label{eq:moments_exact}
    M_b=\mathbb{E}\left(\mathbb{P}_s(R_t,n,\bar{\epsilon})^b\right),
\end{align}
and can be used to evaluate the meta distribution via Gil-Pelaez theorem as follows~\cite{Meta1}
\begin{align}\label{eq:meta_exact}
    F_{R_t}(p_{th},n,\bar{\epsilon})&=\mathbb{P}(\mathbb{P}_s(R_t,n,\bar{\epsilon})>p_{th})\\
      &=\frac{1}{2}+\frac{1}{\pi}\int_{0}^{\infty}\frac{\mathcal{I}m(e^{-t \log(p_{th})} {M}_{jt})}{t} dt,\nonumber
\end{align}
{where $\mathcal{I}m(\cdot)$ denotes the imaginary part and $M_{jt}$ is obtained by replacing $b$ by $j t$ in \eqref{eq:moments_exact}, where $j$ is the imaginary unit.}

\ac{The expressions provided in \eqref{eq:Ps_exact}, \eqref{eq:moments_exact}, and \eqref{eq:meta_exact} can be evaluated {numerically}. To obtain an analytic expression, we utilize the moment matching method with \nn{a Beta distribution, proposed in \cite{Meta1}, to approximate \eqref{eq:meta_exact} to obtain}} 
\vspace{-0.15cm}
\begin{align}\label{eq:beta}
    f_X(x)=\frac{x^{\frac{\kappa(\zeta+1)-1}{1-\kappa}}(1-x)^{\zeta-1}}{\text{B}\left(\frac{\kappa \zeta}{1-\kappa},\zeta\right)},
\end{align}
where $\text{B}(\cdot,\cdot)$ is the Beta function, $\kappa={M}_1$ and the $\zeta=\frac{(M_1-M_2)(1-M_1)}{M_2-M_1^2}$. Hence, the meta distribution can be computed as the \ac{complementary cumulative distribution function (CCDF)} of the Beta distribution. However, $M_1$ and $M_2$ provided in \eqref{eq:moments_exact} are difficult to evaluate due to the added channel dispersion term $V(\Omega)$. Instead, a lower bound on the conditional success probability and an approximation of the meta-distribution are given in the following proposition.

\begin{figure*}[t]
     \begin{subfigure}[b]{0.48\textwidth}
     
         \centering
            \includegraphics[height=0.8\linewidth,width=1\linewidth,trim={0cm 0cm 1cm 0.7cm},clip]{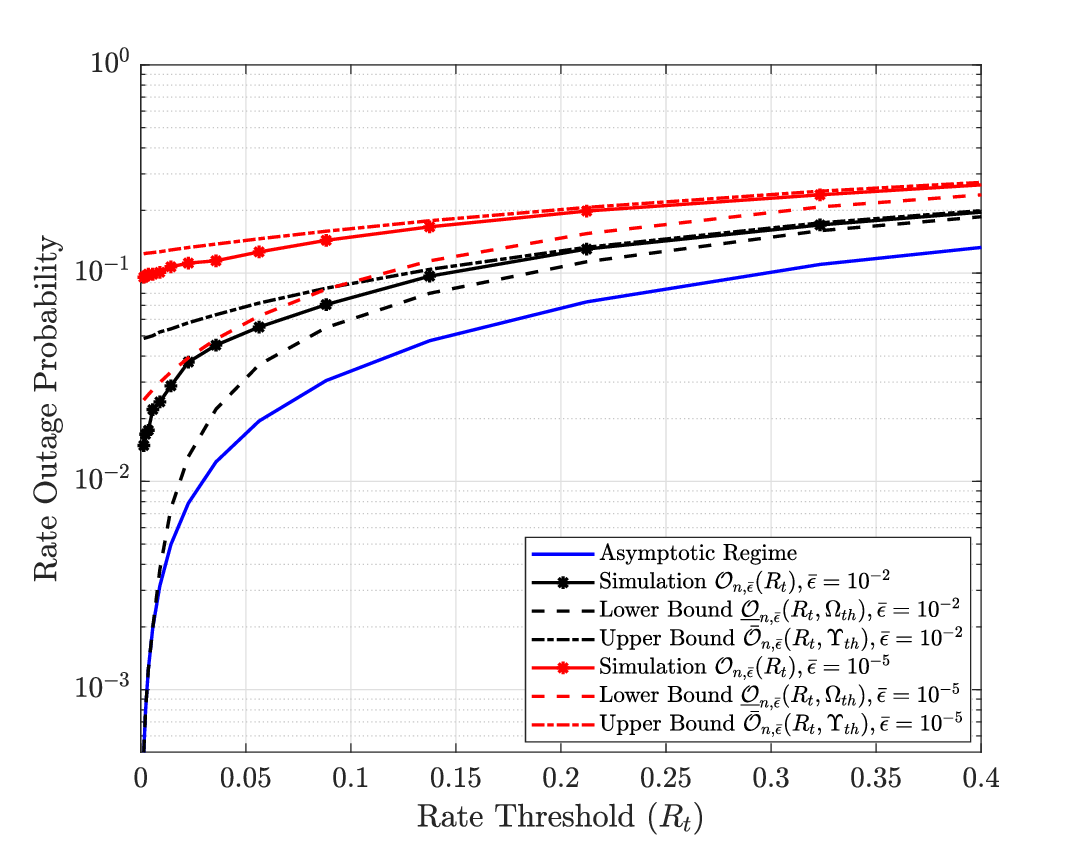}
         \caption{$n=128$ }
         \label{fig:Outage_128}
     \end{subfigure}
     \begin{subfigure}[b]{0.48\textwidth}
        \centering        \includegraphics[height=0.8\linewidth,width=\linewidth,trim={0cm 0cm 1cm 0.7cm},clip]{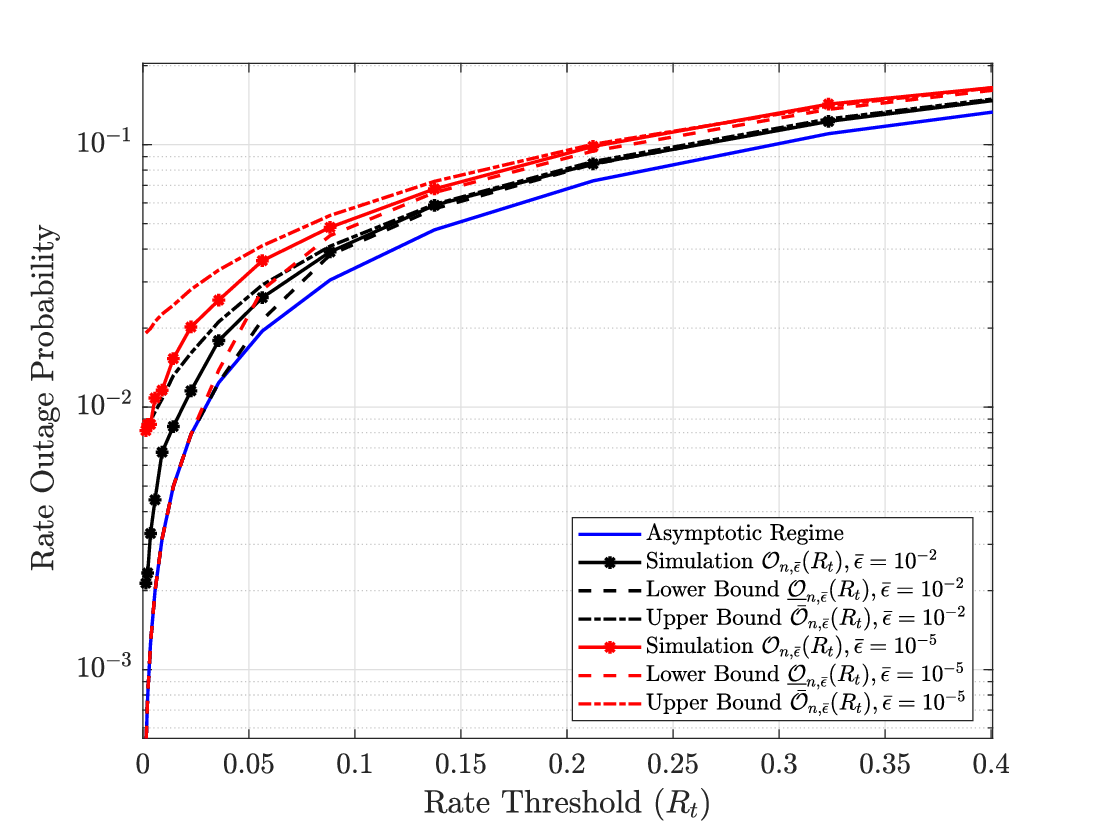}
         \caption{$n=2048$}
         \label{fig:Outage_2048}
     \end{subfigure}
     \caption{The rate outage probability versus $(R_t)$ for $\rho_o\hspace{-0.07cm}=\hspace{-0.07cm}-60\ \text{dBm}$, and $\sigma_w^2\hspace{-0.07cm}=\hspace{-0.07cm}-90\ \text{dBm}$ }
     \label{fig:Rate_Outage_n}
\end{figure*}

\begin{figure*}
\begin{align}\label{eq:approx_moments}
    &\hat{M}_b=\mathbb{E}\left(\mathbb{\hat{P}}_s(R_t,n,\bar{\epsilon})^b\right)\nonumber\\
     &=\hspace{-0.1cm}\int_0^\infty \hspace{-0.2cm} 2\pi\lambda r_0 \exp\left\{ -\pi\lambda r_0^2-\int_{r_0}^{\infty}\frac{2\gamma\left(1+\alpha,\pi\lambda(\rho_o w)^\frac{2}{\eta(1-\alpha)}  \right)}{(\pi\lambda)^{\alpha-1} \rho_o^\frac{-2}{\eta}\eta w^{1-\frac{2}{\eta}}}   \left( 1-\left( \frac{p}{1+\frac{\theta)r_0^{\eta(1-\alpha)}}{\rho_o w }}+(1-p)\right)^b \right) e^{\frac{-\theta b N_0}{\rho_o r_0^{-\eta(1-\alpha)}}} d w \right\},
\end{align}
\hrulefill
\end{figure*}
\begin{Proposition}
The conditional success probability can be lower bounded by 
\vspace{-0.3cm}
\begin{align}\label{eq:lower_bound}
    \mathbb{\hat{P}}_s(R_t,n,\bar{\epsilon})=\hspace{-0.5cm}\prod_{r_i\in \zeta\setminus \{r_0\}} \hspace{-0.15cm} \left(\hspace{-0.1cm} \frac{p}{1+\frac{\theta P_i r_i^{-\eta}}{\rho_o r_0^{-\eta(1-\alpha)}}}\hspace{-0.07cm}+\hspace{-0.07cm}(1-p)\hspace{-0.1cm}\right)  e^{\frac{-\theta N_0}{\rho_o r_0^{-\eta(1-\alpha)}}},
\end{align}
where $\theta=2^{R_t+a_{n,\bar{\epsilon}}(\infty)-b_n}-1$. 
Using \eqref{eq:lower_bound} and the Beta distribution approximation, the meta distribution \hs{in \eqref{eq:meta_exact}} can be approximated as 
\begin{align}\label{eq:meta_beta}
      &\hat{F}_{R_t}(p_{th},n,\bar{\epsilon})\hspace{-0.07cm}\approx \hspace{-0.07cm}\nonumber\\&\hspace{0.5cm}1\hspace{-0.07cm}-\hspace{-0.07cm} I_{p_{th}}\hspace{-0.1cm}\left(\frac{\hat{M}_1(\hat{M}_1-\hat{M}_2)}{(\hat{M}_2-\hat{M}_1^2)},\frac{(1-\hat{M}_1)(\hat{M}_1-\hat{M}_2)}{(\hat{M}_2-\hat{M}_1^2)}  \right),
\end{align}
where $I_{z}(x,y)=\frac{1}{\text{B}(x,y)}\int_{0}^{z} t^{x-1} (1-t)^{y-1}dt$ is the regularized incomplete Beta function, and the $\hat{M}_1$ and $\hat{M}_2$ are the moments of $\mathbb{\hat{P}}_s(R_t,n,\bar{\epsilon})$ provided in \eqref{eq:approx_moments} at the top of next page.
\end{Proposition}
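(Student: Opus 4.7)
The plan is to establish the lower bound $\mathbb{\hat P}_s$ by eliminating the SINR-dependent penalty term in $R_{n,\bar\epsilon}(\Omega)$, and then to compute the moments of this bound to feed the Beta-approximation machinery. First, I would invoke the observation preceding Proposition~\ref{Proposition:1} that $V(\Omega)$ is monotonically increasing on $[0,\infty)$ and converges to $\log_2^2(e)$, hence $a_{n,\bar\epsilon}(\Omega)\leq a_{n,\bar\epsilon}(\infty)$ for every $\Omega\geq 0$. Substituting into $R_{n,\bar\epsilon}(\Omega)=\log_2(1+\Omega)-a_{n,\bar\epsilon}(\Omega)+b_n$ shows $\{R_{n,\bar\epsilon}(\Omega)>R_t\}\supseteq\{\Omega>\theta\}$ with $\theta=2^{R_t+a_{n,\bar\epsilon}(\infty)-b_n}-1$, so $\mathbb{P}_s(R_t,n,\bar\epsilon)\geq \mathbb{P}^{!}(\Omega>\theta\mid\Phi,\Psi)$.

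Next I would plug in the SINR expression and exploit $|h_0|^2\sim\mathrm{Exp}(1)$ to reduce the conditional tail event to a Laplace transform of the aggregate interference evaluated at $s=\theta r_0^{\eta(1-\alpha)}/\rho_o$, times the AWGN exponential $\exp(-\theta N_0 r_0^{\eta(1-\alpha)}/\rho_o)$. Conditioning on $(\Phi,\Psi)$ freezes all interferer positions $u_i$ and power-control distances $r_i$, so the remaining randomness is the independent pair $(x_i,|h_i|^2)$ per interferer, allowing the expectation to factorize over $i$. Evaluating the inner expectation first over $|h_i|^2\sim\mathrm{Exp}(1)$ produces $1/(1+\theta P_i u_i^{-\eta}/(\rho_o r_0^{-\eta(1-\alpha)}))$, and then over $x_i\sim\mathrm{Bern}(p)$ produces the convex combination $p/(1+\cdot)+(1-p)$ appearing in \eqref{eq:lower_bound}.

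For the meta distribution, I would compute $\hat M_b=\mathbb{E}[\hat{\mathbb P}_s^b]$ directly from \eqref{eq:lower_bound}. Averaging over $r_0$ uses the nearest-BS Rayleigh density $2\pi\lambda r_0 e^{-\pi\lambda r_0^2}$. Conditional on $r_0$, raising the product to the $b$-th power simply distributes the exponent inside the per-interferer factor, leaving a product of the form $\prod_i g(u_i,r_i)$ which I convert through the PGFL of the interferer PPP into an exponential of a double integral over $(u,r)$. The inner integration over the power-control distance $r$ against its Rayleigh density, restricted to $r\leq (\rho_o u)^{1/(\eta(1-\alpha))}$ so that an interferer at distance $u$ does not fall inside the typical user's cell, yields the lower incomplete gamma $\gamma(1+\alpha,\pi\lambda(\rho_o w)^{2/(\eta(1-\alpha))})$ appearing in \eqref{eq:approx_moments}, exactly as in the derivation of $\mathcal{L}_{\mathcal I}(s)$ in Appendix~A. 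Combining with the AWGN exponential $e^{-\theta b N_0 r_0^{\eta(1-\alpha)}/\rho_o}$ recovers \eqref{eq:approx_moments}. Finally, substituting $b\in\{1,2\}$ into the Beta moment-matching formula of \cite{Meta1} and using the CCDF of the Beta distribution gives \eqref{eq:meta_beta}.

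The principal obstacle is the coupling between each interferer's location $u_i$ relative to the typical BS and its power-control distance $r_i$ relative to its own serving BS, which is induced by the Voronoi tessellation of $\Psi$ and makes the PGFL not directly applicable. I would rely on the same standard uplink-stochastic-geometry approximation already invoked in Appendix~A for $\mathcal{L}_{\mathcal I}(s)$, namely that the $r_i$ are independent Rayleigh with parameter $\lambda$ and that each interferer lies outside the typical user's cell; this both legitimizes the PGFL step and explains why \eqref{eq:approx_moments} is an approximation rather than an exact identity, with the resulting $\hat F_{R_t}$ inheriting the same status.
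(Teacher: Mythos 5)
Your proposal is correct and follows essentially the same route as the paper: bounding the dispersion penalty by its limit $a_{n,\bar\epsilon}(\infty)$ (equivalently $\sqrt{V(\Omega)}\leq\log_2(e)$) to reduce the success event to $\{\Omega>\theta\}$, then averaging over $|h_i|^2\sim\mathrm{Exp}(1)$ and $x_i\sim\mathrm{Bern}(p)$ conditioned on the point process to get \eqref{eq:lower_bound}, and finally feeding the PGFL-based moments \eqref{eq:approx_moments} into the Beta moment-matching to get \eqref{eq:meta_beta}. Your additional detail on deriving $\hat M_b$ via the displaced/mapped non-homogeneous PPP with intensity \eqref{eq:lambda}, and your remark that this step is the source of the approximation, simply spells out what the paper leaves implicit by reference to Appendix~A.
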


\begin{proof}
Starting from \eqref{eq:Ps_exact}, we upper bound the $\sqrt{V(\Omega)}$ by $\log_2(e)$ which is considered a simple yet convenient bound (it quickly approaches $\log_2(e)$ as $\Omega$ increases), leading to $\mathbb{P}_s(R_t,n,\bar{\epsilon})\geq \mathbb{\hat{P}}_s(R_t,n,\bar{\epsilon})$ where
\begin{align}
    \mathbb{\hat{P}}_s(R_t,n,\bar{\epsilon})&=\mathbb{P}^{!}\left(\log_2 (1+\Omega)- a_{n,\bar{\epsilon}}(\infty)+b_n > R_t|\zeta\right).\nonumber
\end{align}
Consequently, 
\begin{align}
    &\mathbb{\hat{P}}_s(R_t,n,\bar{\epsilon})=\mathbb{P}^{!}\left(\Omega > \theta|\zeta\right) \nonumber\\
    &=\mathbb{E}_{h_i,x_i}\left\{\prod_{r_i\in \zeta\setminus \{r_0\}} e^{\frac{-\theta x_i |h_i|^2 P_i r_i^{-\eta}}{\rho_o r_0^{-\eta(1-\alpha)}}}  \right\}e^{\frac{-\theta N_0}{\rho_o r_0^{-\eta(1-\alpha)}}},\nonumber
\end{align}
By averaging over the channel gains, we obtain \eqref{eq:lower_bound}.
The moments of $\mathbb{\hat{P}}_s(R_t,n,\bar{\epsilon})$ are given in~\eqref{eq:approx_moments}. Substituting these moments in the Beta distribution \eqref{eq:beta} yields \eqref{eq:meta_beta}.
\end{proof}

\section{Numerical Results}\label{sec:Results}

We consider a network with intensity $\lambda=2\  \text{BSs}/\text{km}^2$, power control parameter $\rho_o=\{ -60\}\ \text{dBm}$, and path loss exponent $\eta=4$. All the results are validated using Monte Carlo simulations. In the simulation scenario, we set one PPP realization for the BSs in an area of $ 30 \times 30\ \rm{km}^2$ area. One UE is placed \hs{uniformly within the coverage of each BS.}

For the outage probability, we vary the network and the fading realization {and calculate the instantaneous SINR $\Omega$ and the instantaneous rate $R_{n,\bar{\epsilon}}(\Omega)$. Then, we compare the instantaneous rate to the rate threshold $R_t$ and average to evaluate the outage probability.} Further, we numerically optimize the bounds at each $R_t$ to optimally choose $\Lambda$ and $\Upsilon$. In Fig.~\ref{fig:Rate_Outage_n}, the rate outage probability is plotted versus $R_t$ for $n=\{128,2048\}$ and target FER $\bar{\epsilon}\hspace{-0.1cm}\in\hspace{-0.1cm}\{10^{-2},10^{-5}\}$. As shown in the figure, the rate outage upper and lower bounds in \eqref{eq:Outage_UB} and \eqref{eq:Outage_LB} are tight bounds on the performance of the network especially at moderate to high $R_t$. It is worth noting that the proposed upper bound provides a guaranteed performance at all values of $R_t$ and can be used as an approximation for moderate to high $R_t$. \ac{While it is evident that a gap exists between the rate outage probabilities in the AR and FBR, quantifying and investigating the behavior of this gap is essential. {This gap is due to the term $\sqrt{\frac{V(\Omega}{n}}Q^{-1}(\epsilon)$, which is a function of SINR, which in turn depends on the network scale and density. When the SINR is high/low the $V(\Omega)$ increases/decreases, increasing/decreasing the gap between the FBR and the AR performance. However, the SINR is random and hence the gap needs to be characterized}.} This gap increases as the target FER $\Bar{\epsilon}$ becomes more stringent and the blocklength $n$ becomes shorter, which makes the rate outage probability in the AR for short blocklengths (e.g. $n=128$) misleading, which motivates the analysis in FBR. Moreover, it is noticed that as the blocklength increases the rate outage decreases approaching the performance in the AR.

    \ac{To investigate the performance of the simplified bound provided in Corollary \ref{Corollary:1}, the simplified upper bound versus $R_t$ is plotted in Fig. \ref{fig:Simplified_Bound} with the optimized upper bound and the exact performance of the network. The simplified upper bound provides a simple expression that can be easily evaluated but it is not tight for low $R_t$. However, for moderate $R_t$, it is shown to be tight and can be used as an approximation. The tightness of this bound at moderate to large $R_t$ and the existence of a gap between the exact outage probability and the bound is explained next.}

    \ac{In Fig. 1 and 2, it is observed that a gap exists between the optimized bounds and the exact outage probability of the network. This gap increases at low $R_t$ and diminishes at moderate to high $R_t$. Hence, an analysis is conducted to gain insights into the behavior of this gap and to investigate the significance of the parameters $\Upsilon$ and $\Lambda$, and their impact on the outage probability. We {start} with the outage definition, where $\mathcal{O}_{n,\bar{\epsilon}}(R_t)=\mathbb{P}(R_{n,\bar{\epsilon}}(\Omega)<R_t)=\mathbb{P}(C(\Omega)<R_t+a_{n,\bar{\epsilon}}(\Omega)-b_n)$. The outage probability is calculated by {integrating} the {the pdf of the SINR $\Omega$ such that $C(\Omega)<R_t+a_{n,\bar{\epsilon}}(\Omega)-b_n$}. Hence, we plotted the capacity $C(\Omega)$ and the term $R_t+a_{n,\bar{\epsilon}}(\Omega)-b_n$ as functions of SINR $\Omega$ in Fig. 3. The term $R_t+a_{n,\bar{\epsilon}}(\Omega)-b_n$ {increases} from $R_t-b_n$ to $R_t+a_{n,\bar{\epsilon}}(\infty)-b_n$ as SINR increases. To simplify our analysis, we {upper bound} this function by a step function that {changes value} from $R_t+a_{n,\bar{\epsilon}}(\Upsilon)-b_n$ to $R_t+a_{n,\bar{\epsilon}}(\infty)-b_n$ at $\Upsilon$, {which} {is also plotted {in Fig. 3 for} optimized values of $\Upsilon$.} {It is worth noting that the optimal $\Upsilon$ is chosen numerically such that $\bar{ \mathcal{O}}_{n,\bar{\epsilon}}(R_t,\infty)$ is minimized.} For small $R_t$, we find that the {approximating} step function overestimates the region at the optimal choice of $\Upsilon$, leading to an integration over undesired region {thus increasing the gap between the bound and} the actual outage probability. For moderate to large $R_t$, the point of intersection of the approximated step function and the capacity approximately {matches} with the point of intersection of the two exact curves {for any choice of $\Upsilon$}. Hence, it does not overestimate the {integration interval}. The significance of $\Upsilon$ is evident at low $R_t$. However, its impact is negligible at moderate to high $R_t$ which justifies the tightness of the simplified bounds. For the lower bound, the same argument can be made.}

  \begin{figure}
        \centering
        \includegraphics[width=\linewidth,height=7.1cm,trim={0cm 0cm 1cm 0.7cm},clip]{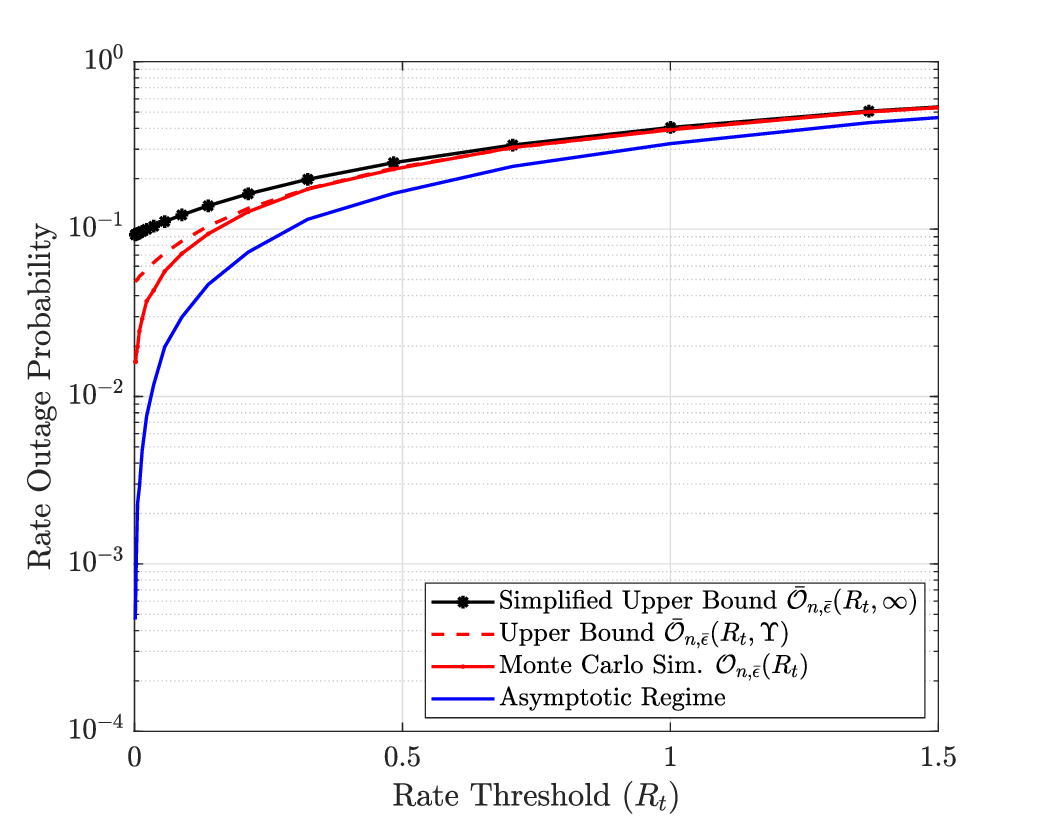}
        \caption{Outage Probability versus $R_t$ at $n=128$ and $\epsilon=10^{-2}$}
        \label{fig:Simplified_Bound}
    \end{figure}

For the meta-distribution of the coding rate simulation, \hs{\eqref{eq:meta_beta} is compared to a simulation with fixed BSs and UEs realizations}, and the fading realization changes at each time slot and each user becomes active with probability $\delta$. The SINR is evaluated for the active users and the simulation runs until $20,000$ samples of SINR are collected for each UE, which are used to evaluate per UE coding rate. The meta-distribution of the coding rate is then evaluated across all UEs. In Fig.~\ref{fig:MetaDistribution}, the coding rate meta distribution is plotted versus $p_{th}$ for rate thresholds $R_t\in\{0.5,1.2,1.5\}$ bits/transmission. It is shown that the approximation provided in \eqref{eq:meta_beta} tightly characterizes the exact coding rate meta distribution for a broad range of $R_t$.\footnote{\ac{It is worth noting that the exact meta-distribution and its lower bound, using the probability of success in \eqref{eq:Ps_exact} and \eqref{eq:lower_bound}, are numerically evaluated using Monte Carlo simulations.}} However, for small $R_t$, the approximation provides a lower bound to the coding rate meta-distribution. \ac{It is worth noting that, for all $R_t$, the approximated meta-distribution provides an achievable performance, meaning it is always guaranteed since it is always below the exact meta distribution.} Also, it is shown that the coding rate meta distribution in the AR significantly overestimates the number of users achieving the rate threshold $R_t$, by \hs{$20\%$ to $40\%$}. Hence, our analysis is more precise in estimating the users' performance.

\begin{figure}
    \centering
    \includegraphics[width=\linewidth, height=7.1cm,trim={1cm 0cm 1cm 0.7cm},clip]{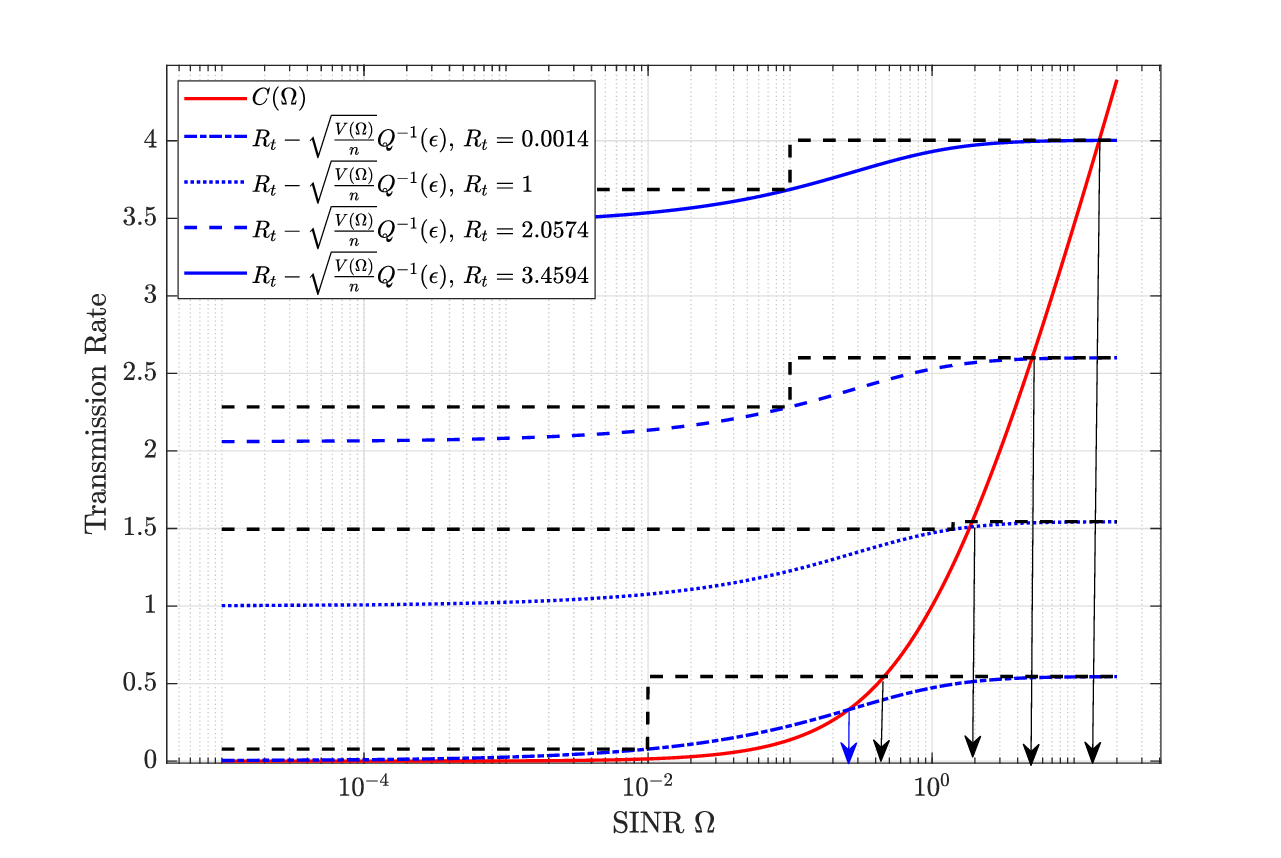}
    \caption{Outage Probability Analysis }
    \label{fig:enter-label}
\end{figure}

\begin{figure}
    \centering
    \includegraphics[width=\linewidth,height=7.1cm,trim={1cm 0cm 1cm 0.7cm},clip]{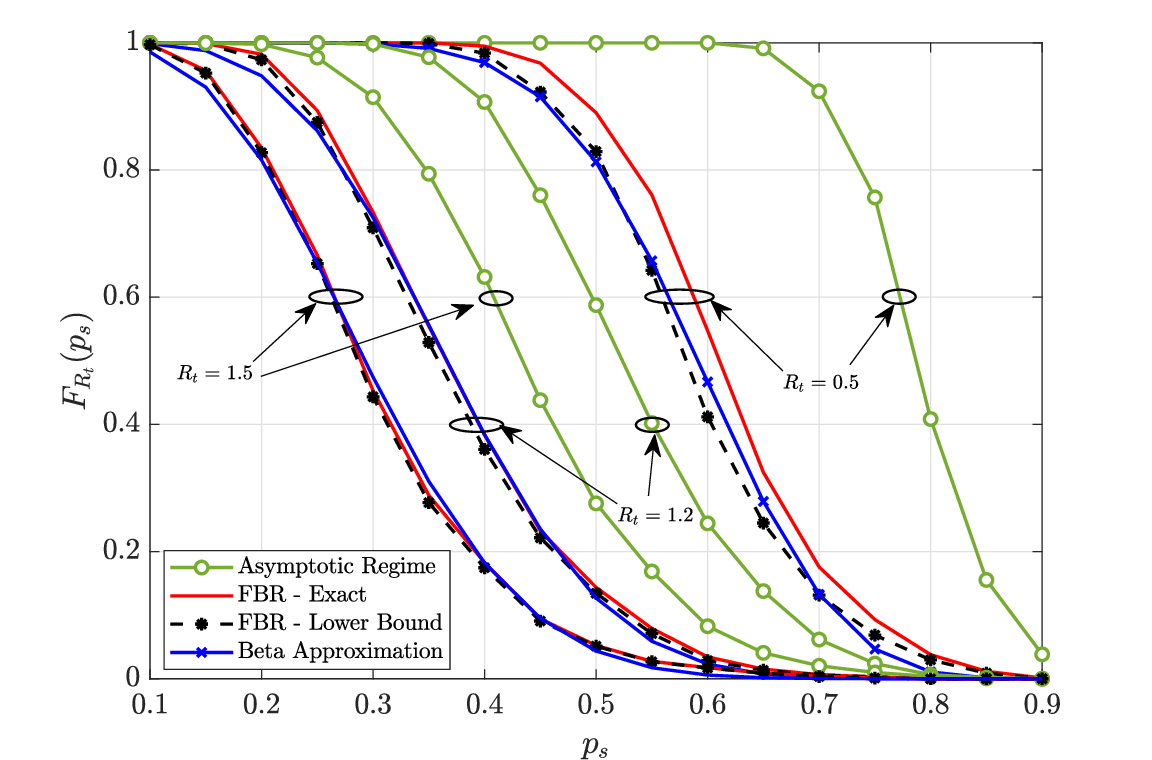}
    \caption{Rate meta distribution versus $p_{th}$ for $n=128$, $\lambda=2\ \text{BS}/\text{Km}^2$, $\bar{\epsilon}=10^{-5}$, $\alpha=1$, and $\delta=0.7$}
    \label{fig:MetaDistribution}
\end{figure}

\section{Conclusion}\label{sec:Conclusion}
We studied the performance of large-scale UL networks in the FBR in terms of the rate outage probability and the coding rate meta-distribution where bounds and approximations for {both of them are proposed}. Numerical results were provided {to highlight} the tightness of the rate outage probability upper and lower bound. Moreover, we provided a qualitative analysis for the performance of the bounds. Also, the coding rate meta-distribution approximation tightly characterizes the performance in the FBR for a large-scale network. It is also shown that the performance in the AR underestimates the rate outage probability and overestimates the coding rate meta distribution, compared to the realistic performance of the network in the FBR, and the gap is quantified. However, as the blocklength $n$ increases, the performance approaches the performance of the AR. In conclusion, our results can accurately depict the performance of large-scale networks in the FBR.

\appendices
\section{Laplace Transform of the Aggregate Interference Power}
By definition, the Laplace transform of $\mathcal{I}$ is  given by
\begin{align}
    \mathcal{L}_\mathcal{I}(s)&=\mathbb{E}\left( \exp\left\{ -s\sum_{r_i \in \Phi\setminus \{r_0\}} x_i {P_i} |h_i|^2 r_i^{-\eta}  \right\} \right).\nonumber
\end{align}
\nn{The interference coming from the interfering UE point process $\{\Phi\setminus u_0\}$ can be approximated with the interference seen from a \nn{non-homogeneous} PPP $\Tilde{\Phi}=\nn{\{}z_i\ ; i \in \mathbb{N}\}$ with intensity function $\lambda_{\text{\tiny{UE}}}(z)=\left(1-e^{-\pi\lambda ||z||^2}\right)\lambda$, where \nn{$z_i$ is the vector coordinates of the UEs} and} \nn{$\lambda_{\text{\tiny{UE}}}(z)$ models the intensity of UEs as a function of UE's location $z$ \cite{Meta}.} To simplify the analysis, \nn{we define $w=\frac{||z||^{\eta}}{P}$. Then, for an arbitrary realization, $w_i=\frac{||z_i||^{\eta}}{P_i}$ where $w_i\in \Tilde{\Phi}_o$ and follows a PPP} with the following intensity function \cite{Meta}
\begin{align}\label{eq:lambda}
    \lambda(w)=\frac{2(\pi\lambda)^{1-\alpha} \rho_o^\frac{2}{\eta}}{\eta w^{1-\frac{2}{\eta}}} \gamma\left(1+\alpha,\pi\lambda(\rho_o w)^\frac{2}{\eta(1-\alpha)}  \right),
\end{align}
which is obtained by the Displacement and Mapping Theoreoms for the PPP \cite[Theorems 2.33 \& 2.34]{reducedPalm_book}. Hence, the Laplace transform of interference is given by
\begin{align}\label{eq:LT_laststep}
        \mathcal{L}_\mathcal{I}(s)&=\mathbb{E}\left( \exp\left\{ \sum_{w_i \in \Tilde{\Phi} \setminus \{r_0\}} \frac{-s x_i  |h_i|^2}{w_i}  \right\} \right)\nonumber\\
        &=\exp\left\{- \int_{\frac{r_0^{\eta(1-\alpha)}}{\rho_o}}^\infty \lambda(w)\left( 1-\mathbb{E}\left( e^\frac{-s x |h|^2}{w}\right) \right) dw \right\}\nonumber\\
        &=\exp\left\{- \int_{\frac{r_0^{\eta(1-\alpha)}}{\rho_o}}^\infty \lambda(w)\left( 1-\frac{\delta w}{w+{s}} -(1-\delta) \right) dw \right\}\nonumber\\
        &=\exp\left\{- \int_{\frac{r_0^{\eta(1-\alpha)}}{\rho_o}}^\infty \lambda(w)\left( \frac{\delta s}{w+{s}}  \right) dw \right\}
\end{align}
By substituting \eqref{eq:lambda} in \eqref{eq:LT_laststep}, we get \eqref{eq:LT}.
\section{Rate Outage Probability Upper Bound}\label{Appendix:UpperBound}
The rate outage probability is given by
\begin{align}
    \mathcal{O}_{n,\bar{\epsilon}}(R_t)&=\mathbb{P}\left(\log_2(1+\Omega)-\frac{\sqrt{V(\Omega)}Q^{-1}(\bar{\epsilon})}{\sqrt{n}}+b_n<R_t\right)\nonumber\nonumber
    \end{align}
Define $a_{n,\bar{\epsilon}}(x)= \sqrt{ \frac{V(x)}{ n} } Q^{-1}(\bar{\epsilon})$ Noting that $a_{n,\bar{\epsilon}}(x)$ is an increasing function of $x$, then for any $\Upsilon>0$ we have    
    \begin{align}
    \mathcal{O}_{n,\bar{\epsilon}}(R_t)&=\int_0^{\Upsilon} \mathbb{P}(\Omega)\mathbb{1}(R_{n,\bar{\epsilon}}(\Omega)<R_t) d\Omega\nonumber\\
    &\hspace{1.6cm}+\int_{\Upsilon}^\infty \mathbb{P}(\Omega)\mathbb{1}(R_{n,\bar{\epsilon}}(\Omega)<R_t) d\Omega\nonumber\\
    &\hspace{-1.5cm}\leq \int_0^{\Upsilon} \mathbb{P}(\Omega)\mathbb{1}(\log_2(1+\Omega)-a_{n,\bar{\epsilon}}(\Upsilon)+b_n<R_t) d\Omega\nonumber\\
    &\hspace{-1.5cm}+\int_{\Upsilon}^\infty \mathbb{P}(\Omega)\mathbb{1}(\log_2(1+\Omega)-a_{n,\bar{\epsilon}}(\infty)+b_n<R_t) d\Omega\nonumber
\end{align}

Hence, the rate outage probability is upper bounded by
\begin{align}
    \bar{\mathcal{O}}_{n,\bar{\epsilon}}(R_t,\Upsilon)&=\int_0^{\beta_1} \mathbb{P}(\Omega) d\Omega+\int_{\Upsilon}^{\beta_2} \mathbb{P}(\Omega) d\Omega\nonumber\\
    &=\mathbb{F}_{\Omega}(\beta_1)+\mathbb{F}_{\Omega}(\beta_2)-\ac{\mathbb{F}_{\Omega}(\Upsilon)}\nonumber
\end{align}
for any $\Upsilon>0$, where $\beta_1=\min(\Upsilon,2^{R_t+a_{n,\bar{\epsilon}}(\Upsilon)-b_n}-1)$ and $\beta_2=\max(\Upsilon,2^{R_t+a_{n,\bar{\epsilon}}(\infty)-b_n}-1)$. This concludes the derivation and leads to the result in \eqref{eq:Outage_UB}.
\section{Rate Outage Probability Lower Bound}\label{Appendix:LowerBound}
The rate outage probability is given by $\mathcal{O}_{n,\bar{\epsilon}}(R_t)=\mathbb{P}\left(R_{n,\bar{\epsilon}}(\Omega)<R_t\right)$. Then, since $a_{n,\bar{\epsilon}}(x)$ is increasing in $x$, for any $\Lambda>0$ we have
\begin{align}
 & {\mathcal{O}}_{n,\bar{\epsilon}}(R_t)=\int_0^{\Lambda} \mathbb{P}_\Omega(\Omega) \mathbb{1}(R_{n,\bar{\epsilon}}(\Omega)<R_t) d\Omega\nonumber\\
  &\hspace{3cm}+ \int_{\Lambda}^\infty \mathbb{P}_\Omega(\Omega) \mathbb{1}(R_{n,\bar{\epsilon}}(\Omega)<R_t) d\Omega\nonumber\\
  &\geq \int_0^{\Lambda} \mathbb{P}_\Omega(\Omega) \mathbb{1}(\log_2(1+\Omega)-a_{n,\bar{\bar{\epsilon}}}(0)+b_n<R_t) d\Omega\nonumber\\
  &+ \int_{\Lambda}^\infty \mathbb{P}_\Omega(\Omega) \mathbb{1}(\log_2(1+\Omega)-a_{n,\bar{\bar{\epsilon}}}(\Lambda)+b_n<R_t) d\Omega.\nonumber
  \end{align}
  Define $\mu_1=\min(\Lambda,2^{R_t-b_n}-1)$ and $\mu_2=\max(\Lambda,2^{R_t+a_{n,\bar{\epsilon}}(\Lambda)-b_n}-1)$, and since $a_{n,\bar{\bar{\epsilon}}}(0)=0$, the rate outage probability is lower bounded by
  \begin{align}
 \hspace{-5cm}\underline{\mathcal{O}}_{n,\bar{\epsilon}}(R_t) &=\int_0^{\mu_1} \mathbb{P}_\Omega(\Omega) d\Omega +\int_{\Lambda}^{\mu_2} \mathbb{P}_\Omega(\Omega) d\Omega \ \ \ \nonumber\\
  &=\mathbb{F}_\Omega(\mu_1)-\mathbb{F}_\Omega(\Lambda)+\mathbb{F}_\Omega(\mu_2),\nonumber
\end{align}
for any $\Lambda>0$.This concludes the derivation and leads to \eqref{eq:Outage_LB}.

\bibliographystyle{IEEEtran}
\bibliography{References}

\begin{thebibliography}{10}
\providecommand{\url}[1]{#1}
\csname url@samestyle\endcsname
\providecommand{\newblock}{\relax}
\providecommand{\bibinfo}[2]{#2}
\providecommand{\BIBentrySTDinterwordspacing}{\spaceskip=0pt\relax}
\providecommand{\BIBentryALTinterwordstretchfactor}{4}
\providecommand{\BIBentryALTinterwordspacing}{\spaceskip=\fontdimen2\font plus
\BIBentryALTinterwordstretchfactor\fontdimen3\font minus \fontdimen4\font\relax}
\providecommand{\BIBforeignlanguage}[2]{{%
\expandafter\ifx\csname l@#1\endcsname\relax
\typeout{** WARNING: IEEEtran.bst: No hyphenation pattern has been}%
\typeout{** loaded for the language `#1'. Using the pattern for}%
\typeout{** the default language instead.}%
\else
\language=\csname l@#1\endcsname
\fi
#2}}
\providecommand{\BIBdecl}{\relax}
\BIBdecl

\bibitem{polyanski}
Y.~Polyanskiy, H.~V. Poor, and S.~Verd{\'u}, ``{Channel} coding rate in the finite blocklength regime,'' \emph{IEEE Trans. Inf. Theory}, vol.~56, no.~5, pp. 2307--2359, 2010.

\bibitem{BlockFadingPolyankiy}
W.~{Yang}, G.~{Durisi}, T.~{Koch}, and Y.~{Polyanskiy}, ``{B}lock-fading channels at finite blocklength,'' in \emph{Int. Symp. Wireless Commun. Syst.}, 2013, pp. 1--4.

\bibitem{Uplink1_Hesham}
H.~ElSawy and E.~Hossain, ``On stochastic geometry modeling of cellular uplink transmission with truncated channel inversion power control,'' \emph{IEEE Trans. Wireless Commun.}, vol.~13, no.~8, pp. 4454--4469, 2014.

\bibitem{UL_1}
S.~Singh, X.~Zhang, and J.~G. Andrews, ``Joint rate and {SINR} coverage analysis for decoupled uplink-downlink biased cell associations in {HetNets},'' \emph{IEEE Trans. Wireless Commun.}, vol.~14, no.~10, pp. 5360--5373, 2015.

\bibitem{Meta}
H.~ElSawy and M.-S. Alouini, ``On the meta distribution of coverage probability in uplink cellular networks,'' \emph{IEEE Commun. Lett.}, vol.~21, no.~7, pp. 1625--1628, 2017.

\bibitem{Meta1}
Y.~Wang, M.~Haenggi, and Z.~Tan, ``The meta distribution of the {SIR} for cellular networks with power control,'' \emph{IEEE Trans. Commun.}, vol.~66, no.~4, pp. 1745--1757, 2018.

\bibitem{TWC_paper}
N.~Hesham, A.~Chaaban, H.~ElSawy, and J.~Hossain, ``Finite blocklength regime performance of downlink large scale networks,'' \emph{IEEE Transactions on Wireless Communications}, pp. 1--1, 2023.

\bibitem{Uplink4}
Y.~Xie, P.~Ren, and D.~Xu, ``On the uplink transmission performance of {URLLC} with interference channel,'' in \emph{IEEE Annu. Int. Symp. Personal Indoor Mobile Radio Commun.}, 2020, pp. 1--6.

\bibitem{r_o}
J.~G. {Andrews}, F.~{Baccelli}, and R.~K. {Ganti}, ``A tractable approach to coverage and rate in cellular networks,'' \emph{IEEE Trans. Commun.}, vol.~59, no.~11, pp. 3122--3134, 2011.

\bibitem{reducedPalm_book}
M.~Haenggi, \emph{Stochastic Geometry for Wireless Networks}.\hskip 1em plus 0.5em minus 0.4em\relax Cambridge University Press, 2012.

\end{thebibliography}

\end{document}